\newtheorem{theorem}{Theorem}[section]
\newtheorem{conjecture}[theorem]{Conjecture}
\theoremstyle{remark}
\begin{document}
\newcounter{my}
\newenvironment{mylabel}
{
\begin{list}{(\roman{my})}{
\setlength{\parsep}{-1mm}
\setlength{\labelwidth}{8mm}
\usecounter{my}}
}{\end{list}}

\newcounter{my2}
\newenvironment{mylabel2}
{
\begin{list}{(\alph{my2})}{
\setlength{\parsep}{-0mm} \setlength{\labelwidth}{8mm}
\setlength{\leftmargin}{3mm}
\usecounter{my2}}
}{\end{list}}

\newcounter{my3}
\newenvironment{mylabel3}
{
\begin{list}{(\alph{my3})}{
\setlength{\parsep}{-1mm}
\setlength{\labelwidth}{8mm}
\setlength{\leftmargin}{10mm}
\usecounter{my3}}
}{\end{list}}

\vspace{-5em}

\title{\bf Approximation algorithms on $k-$ cycle covering and $k-$ clique covering}

\author{Zhongzheng Tang ${}^{b,c}$ \quad Zhuo Diao ${}^{a}$\thanks{Corresponding author. E-mail: diaozhuo@amss.ac.cn}
}
\date{
 ${}^a$ School of Statistics and Mathematics, Central University of Finance and Economics
Beijing 100081, China\\
${}^b$ Academy of Mathematics and Systems Science, Chinese Academy of Sciences\\
 Beijing 100190, China\\
 ${}^c$ School of Mathematical Sciences, University of Chinese Academy of Sciences\\
 Beijing 100049, China\\
}





\maketitle

\begin{abstract}
Given a weighted graph $G(V,E)$ with weight $\mathbf w: E\rightarrow Z^{|E|}_{+}$. A $k-$cycle covering is an edge subset $A$ of $E$ such that $G-A$ has no $k-$cycle. The minimum weight of $k-$cycle covering is the weighted covering number on $k-$cycle, denoted by $\tau_{k}(G_{w})$. In this paper, we design a $k-1/2$ approximation algorithm for the weighted covering number on $k-$cycle when $k$ is odd.

Given a weighted graph $G(V,E)$ with weight $\mathbf w: E\rightarrow Z^{|E|}_{+}$. A $k-$clique covering is an edge subset $A$ of $E$ such that $G-A$ has no $k-$clique. The minimum weight of $k-$clique covering is the weighted covering number on $k-$clique, denoted by $\widetilde{\tau_{k}}(G_{w})$. In this paper, we design a $(k^{2}-k-1)/2$ approximation algorithm for the weighted covering number on $k-$clique. Last, we discuss the relationship between $k-$clique covering and $k-$clique packing in complete graph $K_{n}$.


\end{abstract}

\noindent{\bf Keywords}: {$k-$ cycle covering, $k-$ clique covering, $k-$ clique packing}

\section{k-cycle covering}

Given a weighted graph $G(V,E)$ with weight $\mathbf w: E\rightarrow Z^{|E|}_{+}$. A $k-$cycle covering is an edge subset $A$ of $E$ such that $G-A$ has no $k-$cycle. The problem of minimum weight of $k-$cycle covering can be described as follows:
\begin{eqnarray}
\tau_{k}(G_{w})=\min\{\mathbf w^T\mathbf x: A\mathbf x\geq\mathbf  1,\mathbf x\in\mathbf  \{0,~1\}\}\label{ILP1}
\end{eqnarray}

Where $A$ is a $k$-cycle-edge adjacent matrix and $\mathbf w$ is the weight vector. $\mathbf x$ represents the characteristic vector of edge.We obtain the relaxed programming of \eqref{ILP1} as follows:
\begin{eqnarray}
\min\{\mathbf w^T\mathbf x: A\mathbf x\geq\mathbf  1, 0 \leq\mathbf x\leq 1\}\label{LP1}
\end{eqnarray}
We compute the optimal solution $\hat{\mathbf x}^*$ of \eqref{LP1} in polynomial time, then we transfer $\hat{\mathbf x}^*$ to integral vector:
\begin{eqnarray}\label{eq1}
\mathbf x_e=
\begin{cases}
1   &\hspace{2cm} \hat{\mathbf x}^*_e\geq 1/k\\
0   &\hspace{2cm}o.w.
\end{cases}
\end{eqnarray}
Obviously, $\mathbf x$ is a feasible solution of ILP\eqref{ILP1} and $\mathbf w^T\mathbf x\leq k \mathbf w^T\hat{\mathbf x}^*$, which implies a $k$-approximation algorithm of minimum $k$-cycle covering problem.
\begin{algorithm} 
\KwIn{Weighted vector $\mathbf w$, $k$-cycle-edge adjacent matrix $A$}
\KwOut{A feasible solution $x$ of ILP\eqref{ILP1}, which reaches objective value no more than $k$ times of the optimal value.}
\begin{mylabel}
  \vspace{1.5mm}\item[1.] \hspace{3mm} Solve LP\eqref{LP1} and get the optimal solution $\hat{\mathbf x}^*$.
  \vspace{0mm}\item[2.]   \hspace{3mm} Compute $\mathbf x$ by equation\eqref{eq1}.
\vspace{-3mm}
\end{mylabel}
\caption{Approximation algorithm of minimum $k$-cycle covering} \label{alg1}
\end{algorithm}

\section{The $(k-\frac{1}{2})$-approximation algorithm when $k$ is odd}

We take advantage of specific strategy to reach better performance when $k$ is odd.

\begin{algorithm} 
\KwIn{Weighted graph $(G,\mathbf w)$}
\KwOut{A Edge set $E_k$, which covers every $k$-cycle in $G$.}
\begin{mylabel}
  \vspace{1.5mm}\item[0.] \hspace{3mm} Set $E_{k_1}=\emptyset$.
  \vspace{0mm}\item[1.] \hspace{3mm} Solve LP\eqref{LP1} and get the optimal solution $\hat{\mathbf x}^*$.
  \vspace{0mm}\item[2.]   \hspace{3mm} \textbf{For} every $e\in E$
  \vspace{0mm}\item[3.]   \hspace{7mm} \textbf{If} $\hat{\mathbf x}^*_e\geq 2/(2k-1)$~~~  \textbf{Then} $E_{k_1}=E_{k_1}\cup\{e\}$.
  \vspace{0mm}\item[4.]   \hspace{3mm} Suppose $E^\prime$ are these edges of all $k$-cycles in $G-E_{k_1}$ and let $G^\prime$ be a subgraph of $G$ induced by the edge set $E^\prime$.
  \vspace{0mm}\item[6.]   \hspace{3mm} Using the Greedy Algorithm or Random Algorithm, we can find an approximate solution of maximum weight bipartite graph
  \vspace{0mm}\item[7.]   \hspace{4mm}$B=(V_1,V_2,E_B)$, which satisfies $\mathbf W(E_B)\geq (1/2)\mathbf W(E^\prime)$. Set $E_{k_2}=E^\prime\setminus E_B$.
  \vspace{0mm}\item[8.]   \hspace{3mm} Output $E_k=E_{k_1}\cup E_{k_2}$.
  \vspace{-3mm}
\end{mylabel}
\caption{Approximation algorithm of minimum $k$-cycle covering} \label{alg2}
\end{algorithm}

\begin{theorem}
The Algorithm \ref{alg2} has $(k-\frac{1}{2})$ approximate ratio for the minimum $k$-cycle covering problem.
\end{theorem}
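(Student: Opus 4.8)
The plan is to establish two facts: that $E_k$ is a feasible $k$-cycle covering, and that $\mathbf W(E_k)\le (k-\tfrac12)\,\mathbf w^T\hat{\mathbf x}^*$. Since $\mathbf w^T\hat{\mathbf x}^*$ is the optimum of the relaxation \eqref{LP1}, it lower-bounds $\tau_k(G_w)$, so the second fact yields the claimed ratio. I would bound the output weight by splitting $\mathbf W(E_k)\le \mathbf W(E_{k_1})+\mathbf W(E_{k_2})$ and charging each piece to a disjoint part of the fractional optimum, using that $E_{k_1}$ and $E'\supseteq E_{k_2}$ are disjoint by construction (the edges of $E'$ live in $G-E_{k_1}$).

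For feasibility, take any $k$-cycle $C$ of $G$. If some edge of $C$ lies in $E_{k_1}$ we are done, so assume every edge of $C$ avoids $E_{k_1}$; then $C$ is a $k$-cycle of $G-E_{k_1}$ and all its edges belong to $E'$. Because $k$ is odd, $C$ is an odd cycle, and an odd cycle can never be contained in the bipartite graph $B=(V_1,V_2,E_B)$. Hence at least one edge of $C$ falls in $E'\setminus E_B=E_{k_2}$, so $C$ is covered. This is exactly the step where oddness of $k$ is used, so the argument is specific to odd $k$.

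For the weight bound I would prove two inequalities. First, every $e\in E_{k_1}$ has $\hat{\mathbf x}^*_e\ge 2/(2k-1)$, i.e. $1\le (k-\tfrac12)\hat{\mathbf x}^*_e$, whence $\mathbf W(E_{k_1})\le (k-\tfrac12)\sum_{e\in E_{k_1}} w_e\hat{\mathbf x}^*_e$. Second, and this is the crux, I claim every $e\in E'$ satisfies $\hat{\mathbf x}^*_e\ge 1/(2k-1)$. Indeed, pick a $k$-cycle $C$ of $G-E_{k_1}$ through $e$; each of its $k$ edges avoids $E_{k_1}$, so the other $k-1$ edges each have $\hat{\mathbf x}^*<2/(2k-1)$, while the constraint of \eqref{LP1} for $C$ gives $\sum_{f\in C}\hat{\mathbf x}^*_f\ge 1$; subtracting forces $\hat{\mathbf x}^*_e> 1-(k-1)\cdot\tfrac{2}{2k-1}=\tfrac{1}{2k-1}$. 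Combining $\mathbf W(E_{k_2})\le \tfrac12\mathbf W(E')$ (from $\mathbf W(E_B)\ge \tfrac12\mathbf W(E')$) with $1\le (2k-1)\hat{\mathbf x}^*_e$ on $E'$ then gives $\mathbf W(E_{k_2})\le \tfrac{2k-1}{2}\sum_{e\in E'} w_e\hat{\mathbf x}^*_e=(k-\tfrac12)\sum_{e\in E'} w_e\hat{\mathbf x}^*_e$.

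Adding the two bounds and using that $E_{k_1}$ and $E'$ are disjoint subsets of $E$ would yield $\mathbf W(E_k)\le (k-\tfrac12)\sum_{e\in E} w_e\hat{\mathbf x}^*_e=(k-\tfrac12)\,\mathbf w^T\hat{\mathbf x}^*\le (k-\tfrac12)\tau_k(G_w)$, completing the proof. The main obstacle is the lower-bound lemma $\hat{\mathbf x}^*_e\ge 1/(2k-1)$ on $E'$: the point of thresholding at $2/(2k-1)$ rather than at some other value is that it makes this lower bound combine with the factor $\tfrac12$ from the bipartite (max-cut) rounding to produce precisely the same constant $k-\tfrac12$ on both charges, whereas any other threshold would unbalance them and weaken the ratio.
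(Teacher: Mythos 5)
Your proposal is correct and follows essentially the same route as the paper: the same feasibility argument (an odd cycle cannot survive inside the bipartite subgraph $B$), the same charging of $E_{k_1}$ via the threshold $\hat{\mathbf x}^*_e\geq 2/(2k-1)$, and the same key lower bound $\hat{\mathbf x}^*_e> 1/(2k-1)$ on $E^\prime$ derived from the LP constraint of a $k$-cycle in $G-E_{k_1}$, combined with the factor $\tfrac12$ from the bipartite rounding. Your write-up is in fact slightly more careful than the paper's on two minor points: you make explicit that $E_{k_1}$ and $E^\prime$ are disjoint (which justifies summing the two charges over $E$), and you note the strict inequality on the unrounded edges.
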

\begin{proof}
Suppose $\hat{\mathbf x}^*$ and $\mathbf x^*$ are the optimal solution of LP\eqref{LP1} and ILP\eqref{ILP1}, respectively.\\

Firstly, we indicate that $E_k$ is a $k$-cycle covering. Actually, for every $k$-cycle $C_{k}$ in $G$, if it doesn't contain any edge in $E_{k_1}$, then it is a $k$-cycle in $G-E_{k_1}$, thus it is a $k$-cycle in $G'$. Because $G^\prime-E_{k_2}$ is a bipartite graph, of course, $G^\prime-E_{k_2}$ has no $k$-cycle (here $k$ is odd). Thus $C_{k}$ contains some edge in $E_{k_2}$. Above all, we prove that $E_k$ is a $k$-cycle covering of $G$. \\

Additionally, we will show the approximate ratio.\\

On one hand, according to the rounding regulation, we know that:
\begin{eqnarray}\label{ineq1}
\sum\limits_{e\in E_{k_1}}\mathbf w_e\leq(k-\frac{1}{2})\sum\limits_{e\in E_{k_1}}\mathbf w_e\hat{\mathbf x}^*_e.
\end{eqnarray}
On the other hand, every $\hat{\mathbf x}^*_e$ related to $e\in E^\prime$ has the lower bound $1-2(k-1)/(2k-1)=1/(2k-1)$ on the grounds that there exists a $k$-cycle $C_k$ in $G^\prime$ containing $e$, satisfying $\hat{\mathbf x}^*_e\leq2/(2k-1)$ and $\sum_{e\in C_k}\hat{\mathbf x}^*_e\geq 1$.
\begin{eqnarray}\label{ineq2}
\sum\limits_{e\in E_{k_2}}\mathbf w_e\leq(1/2)\sum\limits_{e\in E^\prime}\mathbf w_e\leq(1/2)(2k-1)\sum\limits_{e\in E^\prime}\mathbf w_e \hat{\mathbf x}^*_e\leq (k-\frac{1}{2})\sum\limits_{e\in E\setminus E_{k_1}}\mathbf w_e\hat{\mathbf x}^*_e.
\end{eqnarray}
Combine inequalities \eqref{ineq1} and \eqref{ineq2}:
\begin{eqnarray}
\sum\limits_{e\in E_k}\mathbf w_e\leq(k-\frac{1}{2})\sum\limits_{e\in E}\mathbf w_e\hat{\mathbf x}^*_e\leq(k-\frac{1}{2})\sum\limits_{e\in E}\mathbf w_e \mathbf x^*_e.
\end{eqnarray}
which completes the proof.
\end{proof}

\section{The hardness of $k$-cycle covering when $k$ is even}

According to Algorithm \ref{alg1}, we trivially derive the $k$ approximate ratio whatever $k$ is odd or even. In previous section, we have shown $(k-\frac{1}{2})$ approximate ratio when $k$ is odd, but unfortunately we can't improve the approximate ratio when $k$ is even by using similar techniques. The following Theorem may tell us a possible reason and the hardness of the problem when $k$ is even.

\begin{theorem}(Paul Erd$\ddot{o}$s, Arthur Stone, 1946\cite{Erdos1946})\label{thm:ex}
The extremal function $ex(n; H)$ is defined to be the maximum number of edges in a graph of order $n$ not containing a subgraph isomorphic to $H$. \begin{eqnarray}
ex(n;H)=(\frac{r-2}{r-1}+o(1)){n\choose2}
\end{eqnarray}
where $r$ is the color number of $H$.
\end{theorem}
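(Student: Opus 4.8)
The plan is to prove the two bounds of the asymptotic equality separately, writing $\chi(H)=r$ for the chromatic (``color'') number. The lower bound is the easy direction and is certified by a single extremal graph. Let $T(n,r-1)$ be the balanced complete $(r-1)$-partite graph on $n$ vertices. Since $\chi(H)=r$, a proper coloring of $H$ requires $r$ colors, so $H$ cannot be a subgraph of any $(r-1)$-colorable graph; in particular $T(n,r-1)$ is $H$-free. Counting its edges gives $e(T(n,r-1)) = \left(1-\frac{1}{r-1}\right)\binom{n}{2} - O(n)$, hence $ex(n;H) \geq \left(\frac{r-2}{r-1}+o(1)\right)\binom{n}{2}$.

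The content lies in the matching upper bound, and the first step I would take is to reduce it to a blow-up of a clique. Let $K_r(t)$ denote the complete $r$-partite graph with every class of size $t$. Fixing a proper $r$-coloring of $H$ and embedding each color class into a distinct class of $K_r(t)$ with $t \geq |V(H)|$ shows $H \subseteq K_r(t)$. Thus it suffices to establish the key claim: for every $\epsilon>0$ and every integer $t$ there is an $n_0$ so that every $n$-vertex graph with $n \geq n_0$ and at least $\left(\frac{r-2}{r-1}+\epsilon\right)\binom{n}{2}$ edges contains $K_r(t)$. Applying this with the $t$ above then forces a copy of $H$ and yields $ex(n;H) \leq \left(\frac{r-2}{r-1}+\epsilon\right)\binom{n}{2}$ for all large $n$, which is the upper bound.

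To prove the key claim I would use Szemer\'edi's regularity lemma. Apply it with a regularity parameter chosen from $\epsilon$ to obtain a regular partition into $m$ clusters of near-equal size, and form the reduced graph $R$ by joining two clusters whenever the pair is regular with density exceeding $\frac{r-2}{r-1}+\epsilon/2$. A routine count shows $R$ retains edge density above $\frac{r-2}{r-1}$, so by Tur\'an's theorem $R$ contains a copy of $K_r$. The $r$ clusters realizing this $K_r$ are pairwise regular with density bounded away from $0$; an embedding (counting) lemma then builds $K_r(t)$ inside them by selecting the $t$ vertices of each class from the common neighborhood of the previously chosen vertices, which stays large because regularity forbids the neighborhoods from shrinking too fast and the clusters themselves grow with $n$. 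An alternative, closer to the original $1946$ argument, is induction on $r$: the base case $r=2$ is the K\H{o}v\'ari--S\'os--Tur\'an bound $ex(n;K_{t,t}) = O(n^{2-1/t}) = o(n^2)$, and the inductive step uses supersaturation to locate many copies of $K_{r-1}(s)$ with a common large neighborhood that supplies the last class.

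The main obstacle is precisely this embedding step in the key claim. Whichever route one takes, the difficulty is quantitative bookkeeping: one must choose the regularity parameter, the cluster size, and $t$ in the correct dependent order so that, after fixing vertices in the first few classes, the common neighborhood available for the next class still has size at least $t$. In the regularity proof this is the interaction between the regularity constant and $t$; in the inductive proof it is controlling the number and overlap of the $K_{r-1}(s)$ copies. Once this is handled for the single fixed $t \geq |V(H)|$ needed above, the theorem follows.
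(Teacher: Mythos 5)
First, a point of calibration: the paper itself does not prove this statement at all --- it quotes it as a classical result of Erd\H{o}s and Stone (1946), so there is no in-paper argument to compare yours against. (Strictly speaking, the version stated here, with $r$ the chromatic number of an arbitrary $H$, is the Erd\H{o}s--Stone--Simonovits formulation; the 1946 paper treats $H=K_r(t)$.) Your overall architecture is the standard and correct one: the lower bound via the Tur\'an graph $T(n,r-1)$, the reduction of $H$ to $K_r(t)$ via a proper $r$-coloring, and then regularity plus Tur\'an plus an embedding lemma for the key claim (or, alternatively, the induction on $r$ from the K\H{o}v\'ari--S\'os--Tur\'an base case, which is closer to the original argument).

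However, one step as written would fail: you build the reduced graph $R$ by keeping only pairs whose density exceeds $\frac{r-2}{r-1}+\epsilon/2$, and then assert that a ``routine count'' shows $R$ has density above $\frac{r-2}{r-1}$. That count does not go through. Discarding all pairs of density below a threshold $d$ can cost up to roughly $d\binom{n}{2}$ edges, so with $d=\frac{r-2}{r-1}+\epsilon/2$ the guarantee you retain is only about $(\epsilon/2)\binom{m}{2}$ edges in $R$ --- far below the Tur\'an threshold for forcing $K_r$ when $r\ge 3$. Nor is this mere slack in the estimate: one can construct graphs of density $\frac{r-2}{r-1}+\epsilon$ in which the pairs exceeding your threshold form a $K_r$-free (for instance bipartite) pattern on the clusters, with every remaining pair of density just under the threshold; for such graphs your $R$ contains no $K_r$ and the argument stalls. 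The repair is standard and small: the threshold must be a constant depending only on $\epsilon$, say $d=\epsilon/4$, so that $R$ is guaranteed density at least $\frac{r-2}{r-1}+\epsilon/2$ and Tur\'an applies; the embedding lemma then only needs the surviving pairs' densities to be bounded away from zero (by $\epsilon/4$), not to exceed $\frac{r-2}{r-1}$. With that correction, together with the dependency order $\epsilon \rightarrow d \rightarrow$ regularity parameter $\rightarrow n_0$ that you already gesture at, your sketch is the standard proof.
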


It is known that, when $H$ is bipartite, $ex(n; H) = o(n^2)$. Consider the special case, $H$ is an even cycle $C_{k}$, $ex(n; C_{k}) = o(n^2)$ thus $\tau_{k}(K_{n})= {n\choose2}-o(n^2)$. We have:

\begin{eqnarray}
lim_{n\rightarrow \infty} \tau_{k}(K_{n})/{n\choose2}=1
\end{eqnarray}

Thus there doesn't exist constant $0<c<1$ such that for every graph $G(V,E)$, $\tau_{k}(G)\leq cm$ holds on which is a key quality in our Algorithm \ref{alg2}.


\section{k-clique covering}

Given a weighted graph $G(V,E)$ with weight $\mathbf w: E\rightarrow Z^{|E|}_{+}$. A $k-$clique covering is an edge subset $A$ of $E$ such that $G-A$ has no $k-$clique. The problem of minimum weight of $k-$clique covering can be described as follows:
\begin{eqnarray}
\widetilde{\tau_{k}}(G_{w})=\min\{\mathbf w^T\mathbf x: A\mathbf x\geq\mathbf  1,\mathbf x\in\mathbf  \{0,~1\}\}\label{CILP1}
\end{eqnarray}

Where $A$ is a $k$-clique-edge adjacent matrix and $\mathbf w$ is the weight vector. $\mathbf x$ represents the characteristic vector of edge.We obtain the relaxed programming of \eqref{ILP1} as follows:
\begin{eqnarray}
\min\{\mathbf w^T\mathbf x: A\mathbf x\geq\mathbf  1, 0 \leq\mathbf x\leq 1\}\label{CLP1}
\end{eqnarray}
We compute the optimal solution $\hat{\mathbf x}^*$ of \eqref{CLP1} in polynomial time, then we transfer $\hat{\mathbf x}^*$ to integral vector:
\begin{eqnarray}\label{ceq1}
\mathbf x_e=
\begin{cases}
1   &\hspace{2cm} \hat{\mathbf x}^*_e\geq 1/{k\choose2}\\
0   &\hspace{2cm}o.w.
\end{cases}
\end{eqnarray}
Obviously, $\mathbf x$ is a feasible solution of ILP\eqref{CILP1} and $\mathbf w^T\mathbf x\leq {k\choose2} \mathbf w^T\hat{\mathbf x}^*$, which implies a ${k\choose2}$-approximation algorithm of minimum $k$-clique covering problem.
\begin{algorithm} 
\KwIn{Weighted vector $\mathbf w$, $k$-clique-edge adjacent matrix $A$}
\KwOut{A feasible solution $x$ of ILP\eqref{CILP1}, which reaches objective value no more than ${k\choose2}$ times of the optimal value.}
\begin{mylabel}
  \vspace{1.5mm}\item[1.] \hspace{3mm} Solve LP\eqref{CLP1} and get the optimal solution $\hat{\mathbf x}^*$.
  \vspace{0mm}\item[2.]   \hspace{3mm} Compute $\mathbf x$ by equation\eqref{ceq1}.
\vspace{-3mm}
\end{mylabel}
\caption{Approximation algorithm of minimum $k$-clique covering} \label{alg3}
\end{algorithm}

\section{The $(k^2-k-1)/2$-approximation algorithm for minimum $k$-clique covering problem}

Similarly with Algorithm \ref{alg2}, we have the following approximation algorithm for minimum $k$-clique covering problem.

\begin{algorithm} 
\KwIn{Weighted graph $(G,\mathbf w)$}
\KwOut{A Edge set $E_k$, which covers every $k$-clique in $G$.}
\begin{mylabel}
  \vspace{1.5mm}\item[0.] \hspace{3mm} Set $E_{k_1}=\emptyset$.
  \vspace{0mm}\item[1.] \hspace{3mm} Solve LP\eqref{LP1} and get the optimal solution $\hat{\mathbf x}^*$.
  \vspace{0mm}\item[2.]   \hspace{3mm} \textbf{For} every $e\in E$
  \vspace{0mm}\item[3.]   \hspace{7mm} \textbf{If} $\hat{\mathbf x}^*_e\geq 2/(2{k\choose2}-1)$~~~  \textbf{Then} $E_{k_1}=E_{k_1}\cup\{e\}$.
  \vspace{0mm}\item[4.]   \hspace{3mm} Suppose $E^\prime$ are these edges of all $k$-cliques in $G-E_{k_1}$ and let $G^\prime$ be a subgraph of $G$ induced by the edge set $E^\prime$.


  \vspace{0mm}\item[6.]   \hspace{3mm} Using the Greedy Algorithm or Random Algorithm, we can find an approximate solution of maximum weight bipartite graph
  \vspace{0mm}\item[7.]   \hspace{4mm}$B=(V_1,V_2,E_B)$, which satisfies $\mathbf W(E_B)\geq (1/2)\mathbf W(E^\prime)$. Set $E_{k_2}=E^\prime\setminus E_B$.
  \vspace{0mm}\item[8.]   \hspace{3mm} Output $E_k=E_{k_1}\cup E_{k_2}$.
  \vspace{-3mm}
\end{mylabel}
\caption{Approximation algorithm of $k$-clique covering} \label{alg4}
\end{algorithm}

\begin{theorem}
The Algorithm \ref{alg4} has $(k^2-k-1)/2$ approximate ratio for the minimum $k$-clique covering problem.
\end{theorem}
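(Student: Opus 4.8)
The plan is to mirror the argument used for Algorithm~\ref{alg2}, replacing the role of ``$k$-cycle'' by ``$k$-clique'' and the parameter $k$ by the number of edges of a $k$-clique, namely ${k\choose2}=\frac{k(k-1)}{2}$. The algebraic engine is the identity $2{k\choose2}-1=k^{2}-k-1$, so the threshold $2/(2{k\choose2}-1)$ in Step~3 equals $2/(k^{2}-k-1)$ and the target ratio $(k^{2}-k-1)/2$ equals ${k\choose2}-\tfrac12$. Throughout I would write $\hat{\mathbf x}^{*}$ and $\mathbf x^{*}$ for the optimal solutions of LP\eqref{CLP1} and ILP\eqref{CILP1}, and use that the LP is a relaxation, so $\mathbf w^{T}\hat{\mathbf x}^{*}\le\mathbf w^{T}\mathbf x^{*}=\widetilde{\tau_{k}}(G_{w})$.

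First I would verify that the output $E_{k}=E_{k_1}\cup E_{k_2}$ is indeed a $k$-clique covering. Take any $k$-clique $K$ in $G$. If $K$ meets $E_{k_1}$ we are done; otherwise every edge of $K$ survives in $G-E_{k_1}$, so $K$ is a $k$-clique of $G'$ and all its edges lie in $E'$. Since $G'-E_{k_2}$ is bipartite and every $k$-clique with $k\ge 3$ contains a triangle, which is an odd cycle, the graph $G'-E_{k_2}$ can contain no $k$-clique; hence $K$ must use an edge of $E_{k_2}$. This is the step I would flag as the conceptual heart of the argument, and it is exactly where the clique setting is cleaner than the cycle setting: for cycles one needed $k$ odd so that bipartiteness forbids a $k$-cycle, whereas for cliques bipartiteness already forbids the triangle contained in every $K_{k}$ with $k\ge 3$, with no parity restriction.

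Next I would bound the two pieces of the weight. For the thresholded part, each $e\in E_{k_1}$ satisfies $\hat{\mathbf x}^{*}_{e}\ge 2/(2{k\choose2}-1)$, i.e. $1\le({k\choose2}-\tfrac12)\hat{\mathbf x}^{*}_{e}$, giving $\sum_{e\in E_{k_1}}\mathbf w_{e}\le({k\choose2}-\tfrac12)\sum_{e\in E_{k_1}}\mathbf w_{e}\hat{\mathbf x}^{*}_{e}$. For the bipartite part, I would first establish a uniform lower bound on $\hat{\mathbf x}^{*}_{e}$ for $e\in E'$: there is a $k$-clique $K$ in $G'$ through $e$, all of whose ${k\choose2}$ edges have $\hat{\mathbf x}^{*}$-value below the threshold $2/(2{k\choose2}-1)$, while $\sum_{f\in K}\hat{\mathbf x}^{*}_{f}\ge 1$ from the corresponding LP constraint; subtracting the other ${k\choose2}-1$ edges yields $\hat{\mathbf x}^{*}_{e}\ge 1-({k\choose2}-1)\cdot\frac{2}{2{k\choose2}-1}=\frac{1}{2{k\choose2}-1}$. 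Combining this with the $\tfrac12$-approximation guarantee $\mathbf W(E_{k_2})=\mathbf W(E')-\mathbf W(E_B)\le\tfrac12\mathbf W(E')$ gives $\sum_{e\in E_{k_2}}\mathbf w_{e}\le\tfrac12(2{k\choose2}-1)\sum_{e\in E'}\mathbf w_{e}\hat{\mathbf x}^{*}_{e}\le({k\choose2}-\tfrac12)\sum_{e\in E\setminus E_{k_1}}\mathbf w_{e}\hat{\mathbf x}^{*}_{e}$, using $E'\subseteq E\setminus E_{k_1}$ and nonnegativity of the weights and of $\hat{\mathbf x}^{*}$.

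Finally, since $E_{k_1}$ and $E_{k_2}$ are disjoint (as $E_{k_2}\subseteq E'\subseteq E\setminus E_{k_1}$), I would add the two bounds to obtain $\sum_{e\in E_{k}}\mathbf w_{e}\le({k\choose2}-\tfrac12)\sum_{e\in E}\mathbf w_{e}\hat{\mathbf x}^{*}_{e}=\frac{k^{2}-k-1}{2}\,\mathbf w^{T}\hat{\mathbf x}^{*}\le\frac{k^{2}-k-1}{2}\,\widetilde{\tau_{k}}(G_{w})$, which is the claimed ratio. The only genuine obstacle is the feasibility step above; the weight estimates are routine once the identity $2{k\choose2}-1=k^{2}-k-1$ and the per-edge lower bound are in hand. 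I would double-check the boundary case $k=2$ separately, where a ``clique'' is a single edge and bipartiteness no longer helps, and note that the argument as stated requires $k\ge 3$.
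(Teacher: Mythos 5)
Your proposal is correct and follows essentially the same route as the paper's own proof: the same split into $E_{k_1}$ (LP threshold rounding at $2/(2{k\choose2}-1)$) and $E_{k_2}$ (bipartite complement covering all surviving cliques via the triangle/odd-cycle observation), the same per-edge lower bound $\hat{\mathbf x}^*_e\ge 1/(2{k\choose2}-1)$ on $E'$, and the same final combination against the LP optimum. Your explicit caveat that the argument needs $k\ge 3$ is a small but sensible addition the paper leaves implicit.
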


\begin{proof}

Suppose $\hat{\mathbf x}^*$ and $\mathbf x^*$ are the optimal solution of LP\eqref{CLP1} and ILP\eqref{CILP1}, respectively.\\

Firstly, we indicate that $E_k$ is a $k$-clique covering. Actually, for every $k$-clique $K_{k}$ in $G$, if it doesn't contain any edge in $E_{k_1}$, then it is a $k$-clique in $G-E_{k_1}$, thus it is a $k$-clique in $G'$. Because $G^\prime-E_{k_2}$ is a bipartite graph, of course, $G^\prime-E_{k_2}$ has no $k$-clique ($G^\prime-E_{k_2}$ has no triangle). Thus $K_{k}$ contains some edge in $E_{k_2}$. Above all, we prove that $E_k$ is a $k$-clique covering of $G$. \\

Additionally, we will show the approximate ratio.\\

On one hand, according to the rounding regulation, we know that:
\begin{eqnarray}\label{ineq1}
\sum\limits_{e\in E_{k_1}}\mathbf w_e\leq({k\choose2}-\frac{1}{2})\sum\limits_{e\in E_{k_1}}\mathbf w_e\hat{\mathbf x}^*_e.
\end{eqnarray}
On the other hand, every $\hat{\mathbf x}^*_e$ related to $e\in E^\prime$ has the lower bound $1-2({k\choose2}-1)/(2{k\choose2}-1)=1/(2{k\choose2}-1)$ on the grounds that there exists a $k$-clique $K_k$ in $G^\prime$ containing $e$, satisfying $\hat{\mathbf x}^*_e\leq2/(2{k\choose2}-1)$ and $\sum_{e\in K_k}\hat{\mathbf x}^*_e\geq 1$.
\begin{eqnarray}\label{ineq2}
\sum\limits_{e\in E_{k_2}}\mathbf w_e\leq(1/2)\sum\limits_{e\in E^\prime}\mathbf w_e\leq(1/2)(2{k\choose2}-1)\sum\limits_{e\in E^\prime}\mathbf w_e \hat{\mathbf x}^*_e\leq ({k\choose2}-\frac{1}{2})\sum\limits_{e\in E\setminus E_{k_1}}\mathbf w_e\hat{\mathbf x}^*_e.
\end{eqnarray}
Combine inequalities \eqref{ineq1} and \eqref{ineq2}:
\begin{eqnarray}
\sum\limits_{e\in E_k}\mathbf w_e\leq({k\choose2}-\frac{1}{2})\sum\limits_{e\in E}\mathbf w_e\hat{\mathbf x}^*_e\leq({k\choose2}-\frac{1}{2})\sum\limits_{e\in E}\mathbf w_e \mathbf x^*_e.
\end{eqnarray}
which completes the proof.

\end{proof}

\section{$k$-clique covering and $k$-clique packing in $K_n$}

Given a graph $G(V,E)$, a $k-$clique packing is a set of edge-disjoint $k-$cliques in $G$. The problem of maximum number of $k-$clique packing can be described as follows:

\begin{eqnarray}
\widetilde{\nu_{k}}(G)=\max\{\mathbf 1^T\mathbf x: A^T\mathbf y\leq\mathbf  1,\mathbf y\in\mathbf  \{0,~1\}\}
\end{eqnarray}

It is easy to see for every graph $G$, $\widetilde{\nu_{k}}(G)\leq\widetilde{\tau_{k}}(G)\leq{k\choose2}\widetilde{\nu_{k}}(G)$ holds on.\\

According to Theorem \ref{thm:ex}, the $k$-clique covering number of $K_n$ is $\widetilde{\tau_{k}}(K_{n})=(1/(k-1)-o(1)){n\choose2}$.\\

As for the packing number, we need the classical results in Block Design Theory.\\

A $2$-design (or BIBD, standing for balanced incomplete block design),denoted by $(v,k,\lambda)$-BIBD, is a family of $k-$ element subsets of $X$, called blocks, such that any pair of distinct points $x$ and $y$ in $X$ is contained in $\lambda$ blocks. Here $v$ is number of points, number of elements of $X$, $k$ is number of points in a block, $\lambda$ is number of blocks containing any two distinct points. We have next famous theorem:

\begin{theorem}(\cite{Richard1975})\label{thm:Block}
Given positive integers $k$ and $\lambda$, $(v,k,\lambda)$-BIBD exist for all sufficiently large integers $v$ for which the congruences
$\lambda(v-1)\equiv0(\hspace{-2mm}\mod k-1)$ and $\lambda v(v-1)\equiv0(\hspace{-2mm}\mod k(k-1))$ are valid.
\end{theorem}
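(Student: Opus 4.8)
The plan is to establish the two stated congruences as necessary conditions by double counting, and then to prove that for fixed $k$ and $\lambda$ they become sufficient once $v$ is large enough, which is the entire content of the theorem. Necessity is immediate: if a $(v,k,\lambda)$-BIBD exists, then fixing a point $x$ and counting the pairs $\{x,y\}$ through the blocks on $x$ shows that the replication number $r=\lambda(v-1)/(k-1)$ must be an integer, while counting all incident (point, block) pairs shows that the block number $b=\lambda v(v-1)/(k(k-1))$ must be an integer; these are exactly $\lambda(v-1)\equiv 0 \pmod{k-1}$ and $\lambda v(v-1)\equiv 0 \pmod{k(k-1)}$. So the only hard direction is sufficiency for all sufficiently large admissible $v$.

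For sufficiency I would follow Wilson's recursive design-theoretic approach rather than attempt any direct construction. The key is to pass from balanced designs to the more flexible language of pairwise balanced designs (PBDs) and group divisible designs (GDDs), and to exploit closure of the class of realizable orders. Concretely, for a set $K$ of admissible block sizes let $B(K)$ denote the set of orders $v$ admitting a PBD with all block sizes in $K$; the central structural fact is that $B(K)$ is PBD-closed, that is, a PBD whose block sizes themselves lie in $B(K)$ can be refined, block by block, into a PBD with block sizes in $K$. This closure is what lets one bootstrap a finite supply of small ingredient designs into designs of arbitrarily large order.

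The engine is Wilson's Fundamental Construction: starting from a master GDD and assigning weights to its points, one substitutes transversal designs on its blocks to produce a new GDD, and then fills in the groups with ingredient designs. The transversal designs $TD(k,n)$ required here are equivalent to families of $k-2$ mutually orthogonal Latin squares and exist in abundance, from the affine plane for every prime power $n\ge k-1$ and for composite orders via MacNeish's product construction, which supplies enough building blocks to drive the recursion. Combining the fundamental construction with the closure property reduces the theorem to a purely number-theoretic statement: every sufficiently large $v$ lying in the correct residue classes modulo $k-1$ and $k(k-1)$ can be assembled from the available ingredients.

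The main obstacle, where essentially all of the work lies, is this final reduction: one must show that the finitely many seed designs together with the fundamental construction cover every large admissible $v$, leaving only finitely many exceptions, rather than merely a positive-density subset of admissible orders. This demands a careful simultaneous analysis of the two congruences, together with the closure machinery, to guarantee that no admissible residue class is missed. I would treat this bookkeeping, the passage from ``density one'' to ``all but finitely many,'' as the crux, and I would invoke the existence of the requisite transversal designs and the closure theorems as established ingredients, since reproving them in full is a substantial undertaking in its own right.
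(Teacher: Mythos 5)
The paper does not prove this statement at all: it is quoted verbatim as Wilson's existence theorem, with the citation \cite{Richard1975} standing in for the proof. So there is no in-paper argument to compare yours against; your proposal has to be judged as a standalone proof, and on that standard it has a genuine gap. The necessity half (double counting to get $r=\lambda(v-1)/(k-1)$ and $b=\lambda v(v-1)/(k(k-1))$ integral) is fine and complete. But the sufficiency half --- the entire content of the theorem --- is only described, not proved. You correctly name the machinery of Wilson's actual proof (PBD-closure, group divisible designs, the Fundamental Construction, transversal designs), but the step you label ``bookkeeping,'' namely passing from a finite stock of seed designs to \emph{every} sufficiently large $v$ in the admissible residue classes, is not bookkeeping: it is the deep theorem, occupying Wilson's three papers. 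It requires the eventual-periodicity theory of PBD-closed sets, the algebraic structure of the set of admissible orders (the ``fiber'' argument), and explicit seed designs hitting every residue class modulo $k(k-1)$; none of this is sketched in a way that could be completed mechanically.

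Two further points sharpen the gap. First, your plan to ``invoke the closure theorems as established ingredients'' is dangerously close to circular: the closure property itself (a PBD over $B(K)$ refines to a PBD over $K$) is elementary, but the statement people usually cite as a black box --- that a PBD-closed set contains all sufficiently large integers satisfying the obvious congruences --- is equivalent to the theorem you are trying to prove, so you may not assume it. Second, the supply of transversal designs you rely on is understated: MacNeish's product construction does \emph{not} give $TD(k,n)$ for all large $n$ (its bound collapses, e.g., when $n\equiv 2 \pmod 4$); Wilson's recursion needs either the Chowla--Erd\H{o}s--Straus theorem that $N(n)\to\infty$, or a careful restriction to prime-power orders via Dirichlet's theorem, and the choice affects how the residue classes get covered. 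As it stands, your text is an accurate road map of Wilson's proof, which is exactly what the paper's citation points to, but it is not itself a proof.
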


When $\lambda=1$, it is easy to see  $(n,k,1)$-BIBD exists if and only if $K_n$ contains a perfect $k$-clique packing, which is a $k$-clique packing such that every edge belongs to a $k$-clique. Thus the above Theorem \ref{thm:Block} is equivalent to the following Theorem:

\begin{theorem}\label{thm:packing}
For all sufficiently large integers $n$ satisfying $n\equiv1,k~(\hspace{-2mm}\mod k(k-1))$, then $K_n$ contains a perfect $k$-clique packing.
\end{theorem}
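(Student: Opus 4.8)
The plan is to derive Theorem~\ref{thm:packing} directly from Wilson's existence theorem (Theorem~\ref{thm:Block}) by specializing to $\lambda=1$ and verifying that the stated residue hypothesis on $n$ forces the two numerical divisibility conditions required there. Since Theorem~\ref{thm:packing} is a one-directional (sufficient) statement, I would not attempt to show that the residues $n\equiv 1,k$ are equivalent to Wilson's conditions; I only need the forward implication, and keeping the logical direction straight is the point to watch.

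First I would make the design/packing dictionary precise. Given an $(n,k,1)$-BIBD on a point set $X$ with $|X|=n$, identify $X$ with $V(K_n)$ and each block (a $k$-subset) with the $k$-clique it spans. The defining property that every pair of distinct points lies in exactly $\lambda=1$ block says precisely that every edge of $K_n$ lies in exactly one of these cliques; hence the cliques are edge-disjoint and together cover every edge, which is exactly a perfect $k$-clique packing. The correspondence is reversible, so an $(n,k,1)$-BIBD exists if and only if $K_n$ admits a perfect $k$-clique packing.

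Next I would carry out the arithmetic. With $\lambda=1$ the two congruences of Theorem~\ref{thm:Block} read $(k-1)\mid(v-1)$ and $k(k-1)\mid v(v-1)$. In the case $v\equiv 1 \pmod{k(k-1)}$, write $v=1+k(k-1)s$, so that $v-1=k(k-1)s$ and $v(v-1)=v\cdot k(k-1)s$, and both conditions hold immediately. In the case $v\equiv k \pmod{k(k-1)}$, write $v=k+k(k-1)s=k\bigl(1+(k-1)s\bigr)$, so that $v-1=(k-1)(1+ks)$ and $v(v-1)=k(k-1)\bigl(1+(k-1)s\bigr)(1+ks)$, and again both conditions hold. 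Thus every $n$ satisfying the hypothesis meets the requirements of Theorem~\ref{thm:Block}, and since both theorems quantify over all sufficiently large admissible values, Wilson's theorem yields an $(n,k,1)$-BIBD for all large such $n$; by the equivalence of the previous step, $K_n$ then contains a perfect $k$-clique packing.

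The only genuine subtlety, and hence the step I would treat most carefully, is the \emph{direction} of the implication rather than any hard computation. For composite $k$ the divisibility conditions can also be satisfied by residues other than $1$ and $k$ modulo $k(k-1)$ (for instance $v\equiv 16,21 \pmod{30}$ when $k=6$), so one must resist upgrading the statement to an ``if and only if.'' The residues $1$ and $k$ are merely a clean sufficient family, and verifying that they land inside Wilson's admissible set is the entire content of the arithmetic step.
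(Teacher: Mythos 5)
Your proposal is correct and follows essentially the same route as the paper: specialize Wilson's theorem (Theorem~\ref{thm:Block}) to $\lambda=1$, identify $(n,k,1)$-BIBDs with perfect $k$-clique packings of $K_n$ via the blocks-to-cliques dictionary, and check that the residues $1,k\pmod{k(k-1)}$ satisfy the two divisibility conditions. In fact your write-up is more careful than the paper's: the paper simply asserts that Theorem~\ref{thm:Block} at $\lambda=1$ is \emph{equivalent} to Theorem~\ref{thm:packing}, whereas, as your $k=6$ example shows (the residues $v\equiv 16,21\pmod{30}$ are also admissible), the equivalence fails for composite $k$; only the sufficiency direction, which is all the theorem claims and all you prove, is valid. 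Your explicit arithmetic for the two residue classes is correct and fills in the verification the paper leaves implicit.
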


For all sufficiently large integers $n$ satisfying $n\equiv1,k~(\hspace{-2mm}\mod k(k-1))$, $K_n$ contains $\frac{n(n-1)}{k(k-1)}$ edge-disjoint $k$-clique. So we have $\widetilde{\nu_{k}}(K_{n})\sim\sim\frac{n(n-1)}{k(k-1)}$ and
$k$-clique covering number over $k$-clique packing number in $K_n$ is $k/2$ when $n\rightarrow\infty$, that is:

\begin{eqnarray}
lim_{n\rightarrow \infty} \widetilde{\tau_{k}}(K_{n})/\widetilde{\nu_{k}}(K_{n})= lim_{n\rightarrow \infty} (1/(k-1)-o(1)){n\choose2}/\frac{n(n-1)}{k(k-1)}=k/2
\end{eqnarray}

Recall Tuza's Conjecture, which is related to the ratio of triangle covering number and triangle packing number:
\begin{conjecture}(Tuza, 1981\cite{tuza1981})
$\tau(G)\leq2\nu(G)$ holds for every graph $G$.
\end{conjecture}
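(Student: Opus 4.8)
The plan is to attack Tuza's conjecture through the same linear-programming duality that organizes the rest of this paper, combined with a charging argument against a maximum triangle packing. Writing the covering problem as $\tau(G)=\min\{\mathbf 1^T\mathbf x:\ \sum_{e\in T}x_e\ge 1\ \text{for every triangle}\ T,\ \mathbf x\in\{0,1\}^{E}\}$ and the packing problem as its integral dual, strong LP duality gives $\tau^*(G)=\nu^*(G)$ for the fractional relaxations, so the whole difficulty is concentrated in the two integrality gaps $\tau/\tau^*$ and $\nu^*/\nu$. First I would record the easy bound: if $\mathcal P$ is a maximum, hence maximal, triangle packing with $|\mathcal P|=\nu(G)$, then every triangle of $G$ shares an edge with some member of $\mathcal P$ (otherwise it could be added to $\mathcal P$), so deleting the $3\nu(G)$ edges used by $\mathcal P$ already yields a triangle-free graph and proves $\tau(G)\le 3\nu(G)$. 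The target $\tau(G)\le 2\nu(G)$ therefore amounts to saving one edge per packed triangle.

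Second, I would try to realize this saving locally. For each $T=\{e_1,e_2,e_3\}\in\mathcal P$ the goal is to keep one of its three edges while still destroying every triangle that passes through $T$. A triangle meeting $\mathcal P$ only in $T$ uses exactly one edge $e_i$ together with two edges outside $\mathcal P$; if a survivor edge of $T$ can be chosen so that every such external triangle uses one of the other two edges, then deleting just those two suffices. This reduces the conjecture to a simultaneous, globally consistent choice of one survivor edge per packed triangle, which I would model as a constraint-satisfaction or orientation problem on the auxiliary hypergraph recording how triangles attach to $\mathcal P$. The bipartite max-cut idea driving Algorithms \ref{alg2} and \ref{alg4} is suggestive here, since the extremal tight cases are $K_4$ and $K_5$, where $\tau=2\nu$ and the optimal triangle-free subgraph is precisely a complete bipartite graph $K_{2,2}$, $K_{2,3}$; these are exactly the configurations any survivor rule must accommodate.

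As a complementary route I would invoke the known one-sided relaxations, namely Krivelevich's $\tau(G)\le 2\nu^*(G)$ and $\tau^*(G)\le 2\nu(G)$, as scaffolding. Since $\tau(G)\ge\tau^*(G)=\nu^*(G)\ge\nu(G)$, the first of these already yields Tuza's bound on every graph for which the fractional and integral packing numbers coincide, so the residual task is to handle graphs with a strictly positive fractional-integral packing gap. The plan is to show that wherever the LP optimum $\hat{\mathbf x}^*$ is genuinely fractional the graph is locally $K_4$/$K_5$-like, so that a bounded number of edge-disjoint triangles can be extracted there to absorb the loss incurred when rounding $\hat{\mathbf x}^*$ to an integral cover.

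The hard part, and the reason this statement remains a conjecture rather than a theorem \cite{tuza1981}, is the global consistency of the per-triangle survivor choice: overlapping triangle configurations force the local decisions to conflict, and no discharging or rounding scheme is currently known to resolve these conflicts for arbitrary graphs. I expect this to be the decisive obstacle, and would first test any candidate survivor rule on dense overlapping structures before attempting the general case, noting that the analogous discharging argument does already settle the planar instance.
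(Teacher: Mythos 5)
You have not produced a proof, and no proof exists to compare against: the paper states this as an open conjecture (Tuza, 1981) and offers no argument for it, so the only correct assessment of your proposal is whether it closes the gap itself --- and it does not. Your first step is sound: a maximal packing argument gives $\tau(G)\le 3\nu(G)$, and LP duality gives $\tau^*(G)=\nu^*(G)$ for the fractional relaxations. But the reduction in your second step --- choose one ``survivor'' edge per packed triangle so that deleting the other $2\nu(G)$ edges kills every triangle --- is exactly where the conjecture lives, and you leave it unresolved. Concretely, a triangle of $G$ outside the packing shares exactly one edge with each packed triangle it meets, but it may meet several packed triangles, and it survives precisely when \emph{every} one of those shared edges is chosen as a survivor. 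So the survivor choices interact globally through these triangles, and you give no selection rule, no potential function, and no discharging scheme that resolves the conflicts; you explicitly concede this at the end. A proposal that names its own decisive obstacle and does not overcome it is a research plan, not a proof.

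Your third route has the same character. The bounds $\tau(G)\le 2\nu^*(G)$ and $\tau^*(G)\le 2\nu(G)$ (Krivelevich) are correct, but they cannot be composed into $\tau(G)\le 2\nu(G)$: the first needs $\nu^*=\nu$, and your claim that any graph with $\nu^*>\nu$ is ``locally $K_4$/$K_5$-like'' is unproven and is essentially equivalent in difficulty to the conjecture itself --- Haxell's $\tau\le\frac{66}{23}\nu$ and the known extremal analysis of the fractional gap show how far current techniques get, and they stop short of $2$. The bipartite max-cut device of Algorithms \ref{alg2} and \ref{alg4} does not help here either: it controls covering weight against the \emph{fractional} optimum at ratio near $\binom{k}{2}$, not against the integral packing number at ratio $2$. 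If you want a provable statement in this direction, restrict to classes where one of your two steps actually closes (e.g., the planar case, or graphs with $\nu^*=\nu$); for general $G$ the statement must remain labeled a conjecture, as the paper does.
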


For the ratio of $k$-clique covering number and $k$-clique packing number in graph $G$, the trivial upper bound is ${k\choose2}$. We guess there exists a upper bound between $k/2$ and ${k\choose2}$ for every graph $G$.

\bibliography{ref}

\end{document}